\newtheorem{definition}{Definition}[section] 
\newtheorem{assumption}{Assumption}[section]
\newtheorem{theorem}{Theorem}
\newtheorem{lemma}{Lemma}
\def\BibTeX{{\rm B\kern-.05em{\sc i\kern-.025em b}\kern-.08em
    T\kern-.1667em\lower.7ex\hbox{E}\kern-.125emX}}
\begin{document}

\title{\textbf{Distributed Adaptive Consensus with Obstacle and Collision Avoidance for Networks of Heterogeneous Multi-Agent Systems} 
}

\author{Armel Koulong, \IEEEmembership{Student Member, IEEE} and Ali Pakniyat, \IEEEmembership{Member, IEEE}
\thanks{A. Koulong and A. Pakniyat are with the department of \mbox{Mechanical} Engineering, University of Alabama, Tuscaloosa, AL, USA (e-mails: akoulongzoyem@crimson.ua.edu, apakniyat@ua.edu). }
}

\maketitle
\thispagestyle{plain}
\pagestyle{plain}

\begin{abstract}
This paper presents a distributed adaptive control strategy for multi-agent systems with heterogeneous dynamics and collision avoidance. We propose an
adaptive control strategy designed to ensure leader-following formation consensus while effectively managing collision and obstacle avoidance using potential functions. By integrating neural network-based disturbance estimation and adaptive tuning laws, the proposed strategy ensures consensus and stability in leader-following formations under fixed topologies. 
\end{abstract}

\section{Introduction} \label{Introduction}

The cooperative control of multi-agent systems (MAS) is essential for applications such as unmanned aerial vehicle (UAV) coordination, autonomous vehicle platooning, and smart grids. In these systems, agents interact only with a limited number of other agents, creating a graph-theoretic structure that complicates achieving global consensus \cite{lewis2013,1431045}. This challenge is further amplified in heterogeneous uncertain MAS, where agents with different forms of high-order nonlinear dynamics are subject to external disturbances \cite{Khoo2009}. Moreover, adapting to unknown parameters in real-time further complicates the consensus process, as agents must continuously update their internal models based on limited interactions. 

Despite notable developments in the field, there are still significant gaps in addressing the simultaneous challenges of adaptive control for leader-following consensus with obstacle and collision avoidance in nonlinear systems. Approaches that deal with adaptive tracking control  \cite{lewis2013}, \cite{Hong2006} and consensus in switching topologies \cite{Qin2011} often do not fully address the physical constraints associated with collision avoidance. Similarly, adaptive tracking control methods \cite{Hong2006} have proven effective in systems with dynamic topologies and active leaders, yet their scope does not extend to handling heterogeneous agent dynamics or fixed topology networks, which are critical in real-world applications. Research into local interaction rules~\cite{Olfati2006} and conditions for formation control~\cite{Lin2005} have provided key insights into geometric constraints in specific systems, but these approaches generally overlook the interplay between consensus and physical safety constraints such as collision avoidance. Additionally, stability analyses of MAS under time-dependent communication links \cite{Moreau2005} highlight challenges related to fluctuating network conditions but do not address the need for real-time nonlinear estimation. Furthermore, work on target aggregation and state agreement under switching topologies \cite{Shi2009} explores consensus in nonlinear MAS, yet it only partially considers the complexities posed by physical constraints and heterogeneous dynamics.

Expanding on these foundations, this paper proposes an adaptive control strategy designed to ensure leader-following formation consensus while effectively managing collision and obstacle avoidance using potential functions. This strategy uniquely integrates solutions to these challenges, providing a robust framework for heterogeneous MAS with fixed topologies and external disturbances. The proposed method offers a unified approach, addressing the limitations of previous research by incorporating all the critical features simultaneously. 
The rest of the paper is structured as follows: 

Section \ref{ProblemFormulation} which is the problem formulation defines the MAS dynamics and leader-following framework. The methodology presented in Section \ref{Methodology} describes the proposed adaptive control protocol and neural network-based learning for managing nonlinear dynamics and disturbances. The main results in Section \ref{MainResult} proves the stability and convergence of the control laws. Section \ref{NUMERICALEXAMPLE} provides a numerical example that demonstrates the application of the control strategy in a multi-agent leader-follower formation scenario. Section \ref{CONCLUSION} summarizes the research contributions and potential future applications.

\textbf{Notations:} 
Throughout the paper, absolute value is denoted by \( | \cdot | \); the Euclidean norm of a vector by \( \| \cdot \| \); the Frobenius norm of a matrix by \( \| \cdot \|_F \); the trace of a matrix by \( \text{tr}\{ \cdot \} \); and the set of singular values of a matrix by \( \sigma(\cdot) \), with \( \bar{\sigma}(\cdot) \) and \( \underline{\sigma}(\cdot) \) representing the maximum and minimum singular values, respectively. Other notations will be introduced in the text at their first appearance.

\section{Problem Formulation} \label{ProblemFormulation}
We define the set of follower agents as \( \mathcal{N} = \{1,...,N\} \). The dynamics of the \(i^\text{th}\) follower agent is described using the nonlinear Brunovsky form as follows:
\begin{equation}\label{planta}
\begin{aligned}
\dot{x}_{i}^{1} &= x_{i}^{2}\\
\dot{x}_{i}^{2} &= x_{i}^{3} \\
&~\,\vdots \\
\dot{x}_{i}^{n} &= f_{i}(x_{i}) + u_{i} + w_{i}
\end{aligned}
\end{equation}
where \(k=\{1, 2, \ldots, n\}\), \(x_{i}^{k} \in \mathbb{R}\) is the \(k\)-th state of agent~\(i\), and \(x_i = [{x}_{i}^{1},{x}_{i}^{2},...,{x}_{i}^{n}] \in \mathbb{R}^n \) denoting its state vector, \(u_{i} \in \mathbb{R} \) represents the control input of agent \(i\), and \(w_{i} \in \mathbb{R} \) denotes the bounded unknown time-varying disturbance affecting agent \(i\). 

The unknown functions \( f_i(x) : \mathbb{R}^n \rightarrow \mathbb{R} \) are locally Lipschitz in \( x \) with \( f_i(0) = 0 \), for all $i \in \mathcal{N}$.
Although the governing dynamics \eqref{planta} of the agents are decoupled, their interactions become coupled through collision and obstacle avoidance, as discussed further in this~section.

We utilize the Kronecker product \cite{Wang2011} to collectively represent \eqref{planta} in the following global form:
\begin{equation}\label{plantag}
\begin{aligned}
\dot{x}^{1} &= x^{2}\\[-5pt]
&~\,\vdots \\
\dot{x}^{n} &= f(x) + u  + w 
\end{aligned}
\end{equation}
where $x^{k} = [x_{1}^{k}, x_{2}^{k},\ldots, x_{N}^{k}]^{\top} \in \mathbb{R}^N$ denotes the state vector of the follower agents, $u = [u_{1}, u_{2}, \ldots, u_{N}]^{\top}  \in \mathbb{R}^N$ denotes their control inputs, and $w = [w_{1}, w_{2}, \ldots, w_{N}]^{\top}  \in \mathbb{R}^N$ denotes unknown bounded disturbances. The unknown function $f(x) = [f_{1}(x_{1}), f_{2}(x_{2}), \ldots, f_{N}(x_{N})]^{\top}  \in \mathbb{R}^{n N}$ represents the dynamics of the entire $N$ follower agents.  

In this paper, we consider a leader-follower scenario where the leader agent, denoted by subscript $0$, generates a reference trajectory for follower agents, but this reference trajectory is a priori unknown to all follower agents, and the only information about the reference trajectory available to follower agents is the leader's state vector \(x_0 \equiv x_0(t) \in \mathbb{R}^n\) at the current time $t \in [t_0,\infty)$.

The time-varying dynamics of the leader agent is:
\begin{equation}\label{plantref}
\begin{aligned}
\dot{x}_{0}^{1}(t) &= x_{0}^{2}(t) \\[-5pt]
& ~\,\vdots \\
\dot{x}_{0}^{n}(t) &= f_{0}(x_{0},t)
\end{aligned}
\end{equation}
where \(x_{0}^{k} \in \mathbb{R}\) is the \(k\)-th component of the leader agent's state, and \(x_0 = [{x}_{0}^1,{x}_{0}^2,...,{x}_{0}^n]\) is its state vector. 

The unknown function \( f_0(x_{0},t) : [0,\infty) \times \mathbb{R}^n \rightarrow \mathbb{R} \) in the leader dynamics is considered to be 
piecewise continous in~\(t\) and locally Lipschitz in \(x_0\), with \( f_i(0,t) = 0 \) for all $t \ge 0$, and all \(x_0 \in \mathbb{R}^n\). %

In the consensus problem, it is desired (see, e.g., \cite{WeiRen}) that 
\begin{align} 
    \lim_{t \to \infty} [(x^{k}_{i}-\psi_{i}) - (x^{k}_{0}- \psi_{0})] &= 0 \label{lam} \\
    \lim_{t \to \infty} [(x^{k}_{i}-\psi_{i}) - (x^{k}_{j}- \psi_{j})] &= 0 \label{la}
\end{align}
for each $i \in \mathcal{N}$ and for all $j  \in \mathcal{N}$, $j \ne i$, where \(\psi_{i}\), \(\psi_{j}\) and \(\psi_{0}\) denote the desired offsets for agents \(i\), \(j\) and \(0\), respectively. However, such a strong level of connectivity which requires communication among all agents is often not feasible in networked control systems due to practical constraints such as limited bandwidth, packet loss, and delays (e.g., see \cite{hespanha2007networked}). 
 
In this paper, we seek a decentralized consensus policy where, in addition to its own state \( x_i \), each agent determines its input \( u_i \) based solely on the states of a limited number of other agents, which may or may not include the leader agent. Furthermore, each agent must also account for obstacle and collision avoidance. 

To address collision avoidance between agents as well as obstacle avoidance, we employ potential functions as an extension of \cite{Wang2011}. These potential functions guide the agents' motions, ensuring that they avoid both collisions with each other and obstacles in their environment.

For collision avoidance between agent $i$ and agent $j$ positions, we define $m_{ij}$ as follows:
\begin{equation}  \label{col1}
m_{ij} :=
    \begin{cases}
        \quad 0 &  \| x_i^1 - x_j^1 \| \ge \psi_{ij}  \\
        \frac{\chi}{\left|\left|x_i^1-x_j^1\right|\right|}   &  \| x_i^1 - x_j^1 \| < \psi_{ij}
    \end{cases}
\end{equation}
Similarly, for collision avoidance between agent $i$ and the leader agent $0$, we define $m_{i0}$ as follows:
\begin{equation}  \label{col2}
m_{i0} :=
    \begin{cases}
        \quad 0 &   \| x_i^1 - x_0^1 \| \ge \psi_{i0}  \\
        \frac{\chi}{\left|\left|x_i^1-x_0^1\right|\right|}   & 
        \| x_i^1 - x_0^1 \| < \psi_{i0}
    \end{cases}
\end{equation}
Moreover, for obstacle avoidance between agent $i$ and obstacle $\Omega$, we define $m_{i{\bar{b}}}$ as follows:
\begin{equation} \label{obs1}
m_{i{\bar{b}}} :=
    \begin{cases}
        \quad 0 &  R < ||x_i^1 - \Omega||  \\
        \Bigg[\frac{R^2 - ||x_i^1 - \Omega||^2}{\left|\left|x_i^1 - \Omega\right|\right|^2 - \partial^2}\Bigg]^2   & \partial < ||x_i^1 - \Omega \| \le R
    \end{cases}
\end{equation}
The formulation for obstacle avoidance between the leader agent \( 0 \) and obstacle $\Omega_{\bar{b}}$ is the same, with \( x_i^1 \) replaced by \( x_0^1 \).
In these expressions, \(\chi\) is a positive scalar adjusting the repulsive force strength, \(\psi_{ij}\) is the desired separation between agents \(i\) and \(j\), \(\psi_{i0}\) is the desired separation between agent \(i\) and the leader, \(R\) is the obstacle detection radius, and \(\partial\) is the obstacle radius. 

We define a communication topology \(\mathcal{G} = (\mathcal{V}, \mathcal{E}, A)\), where \(\mathcal{V} = \{v_1, v_2, \dots, v_N\}\) is the set of agents, and \(\mathcal{E} \subseteq \mathcal{V} \times \mathcal{V}\) is the set of edges. A graph is undirected if \((v_i, v_j) \in \mathcal{E} \iff (v_j, v_i) \in \mathcal{E}\); otherwise, it is directed if \((v_i, v_j) \in \mathcal{E}\) but \((v_j, v_i) \notin \mathcal{E}\). The weighted adjacency matrix \(A = [a_{ij}] \in \mathbb{R}^{N \times N}\) models the interaction strength between agents, where:
\[
  a_{ij} :=
  \begin{cases}
    w_{ij}  & \text{if } (i,j) \in \mathcal{E}, \\
    0 & \text{otherwise}.
  \end{cases}
\]
Here, \(w_{ij}\) represents the positive edge weight between agents \(i\) and \(j\). The in-degree matrix \(D = \text{diag}\{d_i\} \in \mathbb{R}^{N \times N}\) is diagonal, where \(d_i = \sum_{j=1}^{N} a_{ij}\) is the sum of the edge weights connected to node \(i\). The Laplacian matrix \(L = D - A\) is positive semi-definite, with one zero eigenvalue if the graph is connected. With a leader agent, the augmented graph is \(\bar{\mathcal{G}} = (\bar{\mathcal{V}}, \bar{\mathcal{E}})\), where \(\bar{\mathcal{V}} = \{v_0, v_1, \dots, v_N\}\), and \(B = \text{diag}\{b_{i}^{0}\} \in \mathbb{R}^{N \times N}\) denotes the leader's interaction, where:

\[
  b_{i}^{0} :=
  \begin{cases}
    w_{i0}  & \text{if } (i,0) \in \bar{\mathcal{E}}, \\
    0 & \text{otherwise}.
  \end{cases}
\]
Here, \(w_{i0}\) represents the positive weight between agent \(i\) and the leader \(0\). The augmented graph \(\bar{\mathcal{G}}\) includes a spanning tree rooted at the leader \(0\), facilitating agent coordination through state communication.

\begin{assumption} \label{assp1}
The adjacency matrix $A$ meets the following conditions:
\begin{enumerate}[(a)]
\item The augmented graph \( \bar{\mathcal{G}} \) contains a spanning tree with the leader as the root; this ensures that no clusters of agents are isolated from the leader. 
In other words, whenever $(i,0) \notin \bar{\mathcal{E}}$, there exists a sequence of nonzero elements of~$A$ of the form ${a}_{i i_2}, {a}_{i_2 i_3}, \cdots, {a}_{i_{l-1} i^{\prime}}$, for some $(i^{\prime},0) \in \bar{\mathcal{E}}$, with the sequence length $l$ being a finite integer.

\item If \( \Vert x_i - x_j \Vert \leq \Psi \) with \(\Psi \in \mathbb{R}_{>0}\) denoting a proximity threshold, then \( a_{ij} = 1 \); this ensures that agents within a $\Psi$-range are connected and can exchange information.
\end{enumerate}
\end{assumption}

\begin{assumption} \label{assp}
\hfill
\begin{enumerate} 
    \item The initial states of all follower agents and the leader agent are bounded, i.e., \(||x_i (t_0)|| \le X_n\) for all \(i \in \mathcal{N}\), and \(||x_0  (t_0)|| \le X_{n0}\).
    \item There exist continuous functions \( f_n(\cdot) \) and \( f_{n0}(\cdot) \) such that \( |f_i(x)| \leq |f_n(x)| \) for all \(i \in \mathcal{N}\), and \( |f_0(x_0,t)| \leq |f_{n0}(x_0)| \) for all \(x\) within the compact sets \(\varsigma_f = \{x \mid \|x\| \leq X_n\}\) and \(\varsigma_0 = \{x_0 \mid \|x_0\| \leq X_{n0}\}\).
    \item The input \(u(t)\) is bounded within any finite time interval, i.e., \(||u(t)|| \le u_n\) for all $t \in [t_0,T]$, $T\geq t_0$.
    \item The unknown disturbance \(w_i\) for each agent is  bounded, i.e., \(||w(t)|| \le w_n\) for all $t \in [t_0,\infty)$.
\end{enumerate}
\end{assumption}

\section{Methodology} \label{Methodology}
\subsection{Neural Network Learning Problem}
In order to account for uncertainties in the dynamics, we use a distributed two-layer 
linear-in-parameters (LIP) neural network \cite{yesildirak1995neural}, enabling each agent to locally update its model. In this framework, each agent independently uses a neural network to model the nonlinear dynamics of itself and the agents connected to it in real-time. This also yields to a reduced number of required neurons in comparison to centralized approaches. 
Accordingly, we model the functions as:
\begin{align} 
\label{orige}
f_{i}(x_{i}) &= \theta_{i}^{\top}\phi_{i}(x_{i}) + \varepsilon_i, \\ 
\label{orige1}
f_{0}(x_{0},t) &= \theta_{0}^{\top}\phi_{0}(x_{0},t) + \varepsilon_{0}, \\
\label{orige2}
w_{i}(t) &= \theta_{iw}^{\top}\phi_{iw}(t) + \varepsilon_{iw}.
\end{align}
where \( \phi_{i}(x_{i}) \), \( \phi_{0}(x_{0},t) \), and \( \phi_{iw}(t) \) are fixed basis functions, with the corresponding weight vectors \( \theta_{i} \), \( \theta_{0} \), and \( \theta_{iw} \), updated in real-time as new data is received. The approximation errors are \( \varepsilon_i \), \( \varepsilon_0 \), and \( \varepsilon_{iw} \).

The neural network approximations for \( f_{i}(x_{i}) \), \( f_{0}(x_{0},t) \), and \( w_{i}(t) \) are therefore
\(
\hat{f}_{i}(x_{i}) = \hat{\theta}_{i}^{\top}\phi_{i}(x_{i}), \quad \hat{f}_{0}(x_{0},t) = \hat{\theta}_{0}^{\top}\phi_{0}(x_{0},t), \quad \hat{w}_{i}(t) = \hat{\theta}_{iw}^{\top}\phi_{iw}(t)
\), where the NN weights \( \hat{\theta}_{i} \), \( \hat{\theta}_{0} \), and \(\hat{\theta}_{iw}\) are computed locally. 
For the brevity of notation, the basis functions and errors are denoted by
\(
\theta = \text{diag}(\theta_1, \ldots, \theta_N), \quad 
\phi(x) = [\phi_1^{\top}(x_1), \ldots, \phi_N^{\top}(x_N)]^{\top}, \quad 
\varepsilon = [\varepsilon_1, \ldots, \varepsilon_N]^{\top}.
\)
Other parameters follow a similar pattern.
The global nonlinearities are expressed as:
\begin{equation}\label{fxes}
     f(x) = {\theta}^{\top} \phi(x) + \varepsilon 
\end{equation}
\begin{equation}\label{f0es}
     f_0(x,t) = {\theta}_0^{\top} \phi_0(x,t) + \varepsilon_0
\end{equation}
\begin{equation}\label{fwes}
     w(t) = {\theta}_w^{\top} \phi_w(t) + \varepsilon_w 
\end{equation}
with approximations:
\begin{equation}\label{fxesh}
      \hat{f}(x) = \hat{\theta}^{\top} \phi(x)  
\end{equation}
\begin{equation}\label{f0esh}
      \hat{f}_0(x,t) = \hat{\theta}_0^{\top} \phi_0(x,t) 
\end{equation}
\begin{equation}\label{fwesh}
      \hat{w}(t) = \hat{\theta}_w^{\top} \phi_w(t)
\end{equation}

The NN weight errors are:
\( 
\tilde{\theta} = \theta - \hat{\theta}\), \(  
\tilde{\theta}_{0} = \theta_{0} - \hat{\theta}_{0}\), \(  
\tilde{\theta}_{w} = \theta_{w} - \hat{\theta}_{w}.
\) 

The \( k^\text{th} \) order tracking error is defined as \(\delta_{i}^k = (x_{i}^k - \psi_{i}) - (x_{0}^k - \psi_{0})\), and globally as \(\delta^{k} = [\delta_{1}^k, \delta_{2}^k, \ldots, \delta_{N}^k]^{\top} \in \mathbb{R}^N\).

\begin{definition}
The tracking error \(\delta^{k}\) is \textit{cooperatively uniformly ultimately bounded} (CUUB) \cite{lewis2013} if there exists a compact set \(\varsigma^k \subset \mathbb{R}^N\) with \(\{0\} \subset \varsigma^k\) such that for any initial condition \(\delta^k(t_0) \in \varsigma^k\), there exists \(B^k \in \mathbb{R}_{>0}\) and \(T_k \in \mathbb{R}_{>0}\) which yield \(||\delta^k(t)|| \le B^k\) for all \(t \ge t_0 + T_k\). \end{definition}

\begin{assumption} \label{assumption1}
The basis functions \( \phi_{i}(x_{i}) \), \( \phi_{0}(x_{0}, t) \), \( \phi_{iw}(t) \), the NN weights \( \theta_{i} \), \( \theta_{0} \), \( \theta_{iw} \), and the approximation errors \( \varepsilon \), \( \varepsilon_{0} \), \( \varepsilon_{w} \) are bounded by specified positive constants. 
\end{assumption}

Let \( \phi_{in} = \max_{x_i \in \varsigma} \|\phi_i(x_i)\| \), \( \phi_{0n} = \max_{x_0 \in \varsigma, t \geq 0} \|\phi_0(x_0, t)\| \), and \( \phi_{iwn} = \max_{t \geq 0} \|\phi_{iw}(t)\| \). Based on 
Assumption~\ref{assumption1},
there exist positive numbers \( \Phi_n \), \( \Theta_n \), and \( \varepsilon_n \); \( \Phi_{n0} \), \( \Theta_{n0} \), and \( \varepsilon_{n0} \); \( \Phi_{nw} \), \( \Theta_{nw} \), and \( \varepsilon_{nw} \), such that:
\(
\|\phi_i(x_i)\| \leq \Phi_n, \quad \|\phi_0(x_0, t)\| \leq \Phi_{n0}, \quad \|\phi_{iw}(t)\| \leq \Phi_{nw},
\)
\(
\|\theta_i\|_F \leq \Theta_n, \quad \|\theta_0\|_F \leq \Theta_{n0}, \quad \|\theta_{iw}\|_F \leq \Theta_{nw},
\)
\(
\|\varepsilon\| \leq \varepsilon_n, \quad \|\varepsilon_0\| \leq \varepsilon_{n0}, \quad \|\varepsilon_w\| \leq \varepsilon_{nw}.
\)

\subsection{Local Synchronization Error}
Given the limited information available to the $i^\text{th}$ agent, 
we define its $k^\text{th}$ order weighted synchronization error as:
\begin{multline}{\label{errordyn}}
e_i^k = - \nu_1 \sum_{j=1}^{N} a_{ij} \bigg[(x_{i}^{k} - \psi_{i}) - (x_{j}^{k} - \psi_{j})\bigg] \\
- \nu_2 b_{i}^{0} \bigg[(x_{i}^{k} - \psi_{i}) - (x_{0}^{k} - \psi_{0})\bigg]
\end{multline}
where the parameters \(\nu_1 \text{ and } \nu_2, \) are positive scalar gains that adjust the influence between agents and the leader, providing flexibility in varying degrees of influence across the network.

Using the simplifying notations:
\(\bar{x}^k_i := x_{i}^{k} - \psi_{i}\),
\(\bar{x}^k_j :=  x_{j}^{k} - \psi_{j}\) and
\(\bar{x}^k_0 :=  x_{0}^{k} - \psi_{0}\),
we rewrite equation \eqref{errordyn} as:
\begin{multline}{\label{errordynu}}
e_i^k = - \nu_{1} \sum_{j=1}^{N} a_{ij} \bigg[\bar{x}^k_i - \bar{x}^k_j\bigg]
- \nu_{2} b_{i}^{0} \bigg[\bar{x}^k_i - \bar{x}^k_0\bigg]
\end{multline}
Here,  \( - \nu_{1} \sum_{j=1}^{N} a_{ij} (\bar{x}^k_i - \bar{x}^k_j) \) ensures consensus among agents by penalizing the difference between the state of agent \(i\) and its neighbors, while \( - \nu_{2} b_{i}^{0} (\bar{x}^k_i - \bar{x}^k_0) \) couples the agent's state to the leader state by penalizing their difference. Defining,
\(
{e}^k = [{e}^k_{1}, {e}^k_{2}, {e}^k_{3}, \dots, {e}^k_{N}]^\top,
\)
\(
\bar{\underline{x}}_{0}^k = [\bar{x}_{0}^k, \dots, \bar{x}_{0}^k]^\top,
\)
\(
\bar{x}^{k} = [\bar{x}_{1}^{k},\dots, \bar{x}_{N}^{k}]^{\top}
\),
we reformulate \eqref{errordynu} in global form as:
\begin{align}
e^k = -(\nu_1 L + \nu_2 B)(\bar{x}^k - \bar{\underline{x}}_{0}^k)
\end{align}

\begin{lemma} \label{rem:lbnon}
Under Assumption \ref{assp1}, 
the matrix \(\nu_1 L + \nu_2 B\) is nonsingular.
\end{lemma}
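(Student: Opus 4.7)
The plan is to argue by contradiction using the standard maximum-entry (infinity-norm) argument for weakly diagonally dominant Z-matrices, combined with the spanning-tree structure of Assumption~\ref{assp1}(a) to rule out a nontrivial null vector. Setting $M:=\nu_1 L+\nu_2 B$, one first observes that $M$ has nonnegative diagonal $M_{ii}=\nu_1 d_i+\nu_2 b_i^0$, nonpositive off-diagonal entries $M_{ij}=-\nu_1 a_{ij}$, and nonnegative row-sum slack equal to $\nu_2 b_i^0$, since $L\mathbf{1}=0$.

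First, I would suppose for contradiction that $Mv=0$ for some $v\neq 0$ and set $c:=\|v\|_\infty>0$. I would pick any index $i^*$ with $|v_{i^*}|=c$ and examine the $i^*$-th row of $Mv=0$, which reads $(\nu_1 d_{i^*}+\nu_2 b_{i^*}^0)v_{i^*}=\nu_1\sum_j a_{i^* j}v_j$. Taking absolute values and applying the triangle inequality yields $(\nu_1 d_{i^*}+\nu_2 b_{i^*}^0)\,c\le\nu_1 d_{i^*}\,c$, which forces $b_{i^*}^0=0$. Because this chain of inequalities is saturated, the equality case of the triangle inequality further implies $v_j=v_{i^*}$ for every $j$ with $a_{i^* j}>0$.

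Next, I would propagate this property along the graph. Setting $S:=\{i:v_i=v_{i^*}\}$, the very same row argument applies verbatim to every $i\in S$ (each attains $\|v\|_\infty$), so $b_i^0=0$ for all $i\in S$ and $j\in S$ whenever $a_{ij}>0$. To conclude, I would invoke Assumption~\ref{assp1}(a): since $b_{i^*}^0=0$, there is a finite chain $a_{i^*,i_2},a_{i_2,i_3},\ldots,a_{i_{l-1},i'}$ of nonzero entries with $(i',0)\in\bar{\mathcal{E}}$, so that $b_{i'}^0>0$. Iterating the propagation step along this chain gives $i_2,i_3,\ldots,i'\in S$, and hence $b_{i'}^0=0$, contradicting $b_{i'}^0>0$. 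Therefore $v=0$, and $M$ is nonsingular.

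The main obstacle is that the communication graph is allowed to be directed, so $L$ is not symmetric and the convenient quadratic-form route $v^{\top}(\nu_1 L+\nu_2 B)v\ge 0$ is unavailable; the signed maximum-entry/triangle-inequality approach is needed instead, and care must be taken in the equality case to track both the magnitude and the sign of $v_j$ along the propagation so that the spanning-tree hypothesis actually delivers the contradiction.
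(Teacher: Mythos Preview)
Your argument is correct. The maximum-entry contradiction, the saturation analysis forcing $b_{i^*}^0=0$ and $v_j=v_{i^*}$ for out-neighbors, and the propagation along the spanning-tree chain from Assumption~\ref{assp1}(a) all go through as stated. One small point worth making explicit in the write-up: once $b_{i^*}^0=0$, Assumption~\ref{assp1}(a) guarantees at least one nonzero $a_{i^*i_2}$, so $d_{i^*}>0$ and the equality-case deduction is nonvacuous.

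The paper takes a different route. Rather than arguing directly on a putative null vector, it observes that $\bm{\pounds}:=\nu_1L+\nu_2B$ is weakly diagonally dominant with row slack $\nu_2 b_i^0$, identifies the set $J$ of strictly dominant rows (those with $b_i^0>0$), and then invokes the Shivakumar--Chew criterion \cite{shivakumar1974sufficient}: a weakly diagonally dominant matrix is nonsingular provided $J\neq\emptyset$ and every $i\notin J$ is connected to $J$ by a chain of nonzero off-diagonal entries---exactly what the spanning-tree hypothesis supplies. Your proof is effectively a self-contained reproof of that criterion specialized to this matrix; it is more elementary and avoids the external citation, while the paper's version is shorter and situates the result within a known class (weakly chained diagonally dominant matrices). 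The underlying mechanism---diagonal dominance plus a connectivity chain to a strictly dominant row---is the same in both.
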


\begin{proof} Let \(J\) denote the set of strictly diagonally dominant nodes of \(\bm{\pounds}:=\nu_1 L + \nu_2 B\), i.e,
\[
J = \{ i : |\pounds_{ii}| > \sum_{j \neq i} |\pounds_{ij}| \}.
\]

If \(J = \mathcal{N}\), i.e., the matrix \(\bm{\pounds}\) is strictly diagonally dominant, then  \(\det(\bm{\pounds}) \neq 0\) is directly obtained by the application of the Gershgorin circle theorem \cite{qu2009cooperative}. 

For the case where \(J \subset \mathcal{N} \) is a strict subset of agents, we note from $\bm{\pounds}:=\nu_1 L + \nu_2 B$ and $L:=D-A$ that \mbox{$\bm{\pounds} = \nu_1 D - \nu_1 A +\nu_2 B$}. Since $B := \text{diag}\{b_{i}^{0}\}$ and \mbox{$D := \text{diag}\{d_i\}$} are diagonal matrices, the on-diagonal elements of $\bm{\pounds}$ are $\bm{\pounds}_{ii} = -\nu_1 d_{ii} -\nu_2 b_{i}^{0}$ and the off-diagonal elements are $\bm{\pounds}_{ij} = -\nu_1 a_{ij}$, $i \neq j$. 

These relations, together with Assumption~\ref{assp1}, yield that: (i) \(J \neq \emptyset\) and (ii) for each agent \(i \notin J\), there exists a sequence of nonzero elements of $\bm{\pounds}$ in the form ${\pounds}_{i_1 i_2}, {\pounds}_{i_2 i_3}, \cdots, {\pounds}_{i_{s-1} i_s}$, for some $i_s \in J$, which directly correspond to the sequence of non-zero elements $ a_{i_1 i_2}, a_{i_2 i_3}, \cdots, a_{i_{s-1} i_s}$ in Assumption~\ref{assp1}. Thus, by invoking \cite[Theorem]{shivakumar1974sufficient}, we establish that $\bm{\pounds}$ is non-singular.
\end{proof}

\subsection{Local Error Dynamics}
The error dynamics are derived by differentiating the synchronization error from \eqref{errordynu}:
\begin{align}
\label{plantas}
\dot{e}^{k} &= e^{k+1}, \quad k=1,...,n-1\notag \\
\dot{e}^{k} &= -(\nu_{1} L + \nu_{2} B)(\dot{\bar{x}}^n - \dot{\bar{\underline{x}}}_0^n), \quad k=n
\end{align}
which, in the expanded form, is written as:
\begin{align} \label{explantas}
\dot{e}^1 &= e^2 = -(\nu_1 L+\nu_2 B)(\bar{x}^2 - \bar{\underline{x}}_{0}^2) \notag \\
\dot{e}^2 &= e^3 = -(\nu_1 L+\nu_2 B)(\bar{x}^3 - \bar{\underline{x}}_{0}^3) \notag \\
&\vdots \notag \\
\dot{e}^n &= -(\nu_1 L+\nu_2 B)(f(\bar{x}) + u + w - f_0(\bar{x}_0,t))
\end{align}

\subsection{Weighted Stability Error}
The weighted stability error \(r_i\) is defined as a linear combination of error terms for each follower agent \(i\):
\begin{equation}
 r_i = \lambda_{1}e_{i }^1 + \lambda_{2}e_{i }^2 + \dots + \lambda_{n-1}e_{i}^{n-1} + e_{i}^n   
\end{equation}
where the design parameters \(\lambda_j\) (for \( j = 1, \ldots, n-1 \)) are chosen such that the characteristic polynomial:
\begin{equation} \label{eq:hurwitz_poly}
s^{n-1} + \lambda_{n-1} s^{n-2} + \cdots + \lambda_1
\end{equation}
is Hurwitz. This ensures that all roots have negative real parts, leading to the stability of the associated linear system. The selection of \(\lambda_j\) can alternatively be made by setting
$s^{n-1} + \lambda_{n-1} s^{n-2} + \cdots + \lambda_1 = \prod_{j=1}^{n-1} (s - \xi_j)
$, and selecting \(\xi_j\)'s to be positive real numbers. The global form of the representation of the weighted stability error is:
\begin{equation} \label{slidem}
r = \lambda_{1}e^{1} + \lambda_{2}e^{2} + \dots + \lambda_{n-1}e^{n-1} + e^{n} 
\end{equation}
By differentiating \eqref{slidem} with respect to time, the weighted stability error dynamics are:
\begin{multline}
\label{slidemd}
\hspace{-9pt}\dot{r} = \lambda_{1}\dot{e}^{1} + \lambda_{2}\dot{e}^{2} + \dots + \lambda_{n-1}\dot{e}^{n-1} + \dot{e}^{n} \\
 = \lambda_{1}{e}^{2} + \lambda_{2}{e}^{3} + \dots + \lambda_{n-1}{e}^{n} - (\nu_{1}L + \nu_{2}B)(\dot{\bar{x}}^n - \dot{\bar{x}}_0^n)
\end{multline}
which can be written as 
\begin{multline} \label{errdyn}
\dot{r} = \rho - (\nu_{1}L + \nu_{2}B)(f(x) + u + w - f_0(x_0,t)),
\end{multline}
where \( \rho = \lambda_{1}{e}^{2} + \lambda_{2}{e}^{3} + \lambda_{3}{e}^{4} + \dots + \lambda_{n-1}{e}^{n} = E_{2}\bar{\lambda} \) with \( \bar{\lambda} = [\lambda_1, \lambda_2,...,\lambda_{n-1}]^{\top} \) and \( E_2=[e^2, e^3,...,e^n]^{\top} \). 

Representing $E_2$ in matrix form yield: 
\[
E_2 = E_1 \bigtriangleup^{\top} + rl^{\top}
\]
where
\(
E_1 = [e^1,e^2,...,e^{n-1}] \in \mathbb{R}^{N\times(n-1)}
\), 
\(
E_2 = \dot{E}_1 = [e^2,e^3,...,e^n] \in \mathbb{R}^{N\times(n-1)}
\), 
\mbox{\(
l = [0,0,...,0,1]^{\top} \in \mathbb{R}^{(n-1)}
\)},
\[
\Delta = \begin{bmatrix}
0 & 1 & 0 & \cdots & 0 \\
0 & 0 & 1 & \cdots & 0 \\
\vdots & \vdots & \vdots & \ddots & \vdots \\
0 & 0 & 0 & \cdots & 1 \\
-\lambda_1 & -\lambda_2 & -\lambda_3 & \cdots & -\lambda_{n-1} \\
\end{bmatrix} \in \mathbb{R}^{(n-1) \times (n-1)}
\] 

As stated in \cite{lewis2013}, since $\bigtriangleup$ is Hurwitz,  there exists a positive-definite matrix \( P_1 \) such that the following condition holds for any positive number \( \bar{\alpha} \):
\begin{equation} \label{huwitx}
    \bigtriangleup^{\top} P_1 + P_1 \bigtriangleup = - \bar{\alpha}I
\end{equation}
where \( I \in \mathbb{R}^{(n-1)\times(n-1)} \).

\begin{lemma}
If \( r_i(t) \) is ultimately bounded, then \( e_i(t) \) is ultimately bounded. 
\end{lemma}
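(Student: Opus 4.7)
The plan is to exhibit the per-agent error vector as the state of a Hurwitz linear system driven by $r_i$, and then invoke a standard Lyapunov / input-to-state-stability argument.

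First, I would extract a per-agent representation from the already-derived global identity $E_2 = E_1 \bigtriangleup^{\top} + r l^{\top}$. Writing $X_i := [e_i^1, e_i^2, \ldots, e_i^{n-1}]^{\top} \in \mathbb{R}^{n-1}$, the $i$-th row of that identity, transposed, yields
\begin{equation}
\dot{X}_i = \bigtriangleup X_i + l \, r_i,
\end{equation}
since $\dot{e}_i^k = e_i^{k+1}$ for $k=1,\ldots,n-1$, and the algebraic definition of $r_i$ supplies the last coordinate. This is a standard companion-form realization in which the $r_i$ enters only through the last coordinate via $l$.

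Next, since $\bigtriangleup$ is Hurwitz by construction of the $\lambda_j$'s, I would use the Lyapunov equation \eqref{huwitx} to pick a positive definite $P_1$ with $\bigtriangleup^{\top} P_1 + P_1 \bigtriangleup = -\bar{\alpha} I$. Setting $V_i(X_i) = X_i^{\top} P_1 X_i$ and differentiating along trajectories gives
\begin{equation}
\dot{V}_i = -\bar{\alpha}\|X_i\|^2 + 2 X_i^{\top} P_1 l \, r_i.
\end{equation}
Applying Young's inequality with some $\epsilon \in (0,\bar{\alpha})$, I would obtain
\begin{equation}
\dot{V}_i \le -(\bar{\alpha} - \epsilon)\|X_i\|^2 + \tfrac{1}{\epsilon}\|P_1 l\|^2 \, r_i^2 .
\end{equation}
By the hypothesis, there exist $R > 0$ and $T^{*} \ge t_0$ such that $|r_i(t)| \le R$ for all $t \ge T^{*}$, so $\dot{V}_i$ is strictly negative whenever $\|X_i\|$ exceeds a threshold proportional to $R$. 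This is the standard Lyapunov criterion for ultimate boundedness, so $X_i(t)$ is ultimately bounded, i.e. $e_i^1,\ldots,e_i^{n-1}$ are ultimately bounded.

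Finally, from the defining identity $e_i^n = r_i - \sum_{k=1}^{n-1} \lambda_k e_i^k$, ultimate boundedness of $r_i$ together with that of $e_i^1,\ldots,e_i^{n-1}$ immediately transfers to $e_i^n$. Stacking the components, $e_i(t) = [e_i^1,\ldots,e_i^n]^{\top}$ is ultimately bounded. The main obstacle is mostly bookkeeping rather than technique: one must correctly extract the per-agent companion form from the global matrix identity so that $l$ genuinely picks out only the last row, and one must reconcile the plain notion of ``ultimately bounded'' used in the lemma with the CUUB definition introduced earlier in the paper. Beyond that the argument is a textbook ISS estimate for a Hurwitz linear filter with bounded input.
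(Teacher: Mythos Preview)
Your argument is correct and is precisely the standard Lyapunov/ISS estimate that underlies this result. The paper itself does not actually supply a proof here: it simply cites \cite[Lemma~10.3]{lewis2013}, so there is no independent approach to compare against. What you have written is essentially the content of that cited lemma, and it dovetails with the global computation the paper carries out later in bounding $\dot V_5$ (cf.~\eqref{V5}), which is the aggregated version of your per-agent estimate on $V_i = X_i^{\top} P_1 X_i$.
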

\begin{proof}
    See \cite[Lemma 10.3]{lewis2013}. 
\end{proof}

\begin{lemma}[Graph Lyapunov Equation] \label{lem:PQpositivedefinite}
Define 
\begin{equation}
    q \equiv [q_1,\cdots,q_N]^{\top} := (\nu_{1}L + \nu_{2}B)^{-1} \underline{1} ,
\end{equation}
\begin{equation}
    P \equiv \text{diag}\{p_i\}_{i \in \mathcal{N}} := \text{diag}\{1/q_i\}_{i \in \mathcal{N}},
\end{equation}
\begin{equation} \label{pdefinite}
    Q:=P(\nu_1L + \nu_2B) + (\nu_1L + \nu_2B)^{\top} P ,
\end{equation}
where \( \underline{1} = [1,...,1]^{\top} \in \mathbb{R}^N \). 
then the matrices \( P \) and \( Q \) are positive definite.
\end{lemma}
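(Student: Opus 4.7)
The plan is to use the $M$-matrix machinery attached to $\pounds := \nu_1 L + \nu_2 B$. Lemma~\ref{rem:lbnon} already tells us $\pounds$ is non-singular. Combining this with its $Z$-sign pattern (off-diagonals $\pounds_{ij} = -\nu_1 a_{ij} \le 0$, diagonals $\pounds_{ii} = \nu_1 d_i + \nu_2 b_i^0 > 0$, where the strict positivity of at least the diagonals connected to the leader follows from Assumption~\ref{assp1}(a)), I would conclude that $\pounds$ is a non-singular $M$-matrix, so that $\pounds^{-1}$ has non-negative entries. Since $\underline{1} > 0$, this gives $q = \pounds^{-1}\underline{1} \ge 0$ componentwise.

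To upgrade this to strict positivity, which is what makes $P = \text{diag}(1/q_i)$ well-defined, I would argue by contradiction: if $q_i = 0$ for some $i$, then the $i$-th row of $\pounds q = \underline{1}$ reads $1 = \pounds_{ii} \cdot 0 + \sum_{j \neq i} \pounds_{ij} q_j = -\nu_1 \sum_{j \neq i} a_{ij} q_j \le 0$, which is impossible. Therefore $q_i > 0$ for every $i$, and $P$ is diagonal with strictly positive entries, hence positive definite.

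For the positive definiteness of $Q$, the main move is a congruence transformation: setting $D := P^{-1} = \text{diag}(q_i)$ and $\tilde{Q} := D Q D$, a direct computation gives $\tilde{Q} = \pounds D + D \pounds^{\top}$. Since $D$ is invertible, $Q \succ 0$ iff $\tilde{Q} \succ 0$. The matrix $\tilde{Q}$ is symmetric by construction, has positive diagonal entries $\tilde{Q}_{ii} = 2\pounds_{ii} q_i > 0$, and non-positive off-diagonals $\tilde{Q}_{ij} = -\nu_1(a_{ij} q_j + a_{ji} q_i)$, so it is a symmetric $Z$-matrix. Its row sums are $\tilde{Q}\underline{1} = \pounds q + D \pounds^{\top}\underline{1} = \underline{1} + D(\pounds^{\top}\underline{1})$. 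I would then show that these row sums are non-negative, with at least one row strictly positive, and invoke the spanning-tree condition from Assumption~\ref{assp1}(a) to chain any merely weakly dominant row to a strictly dominant one, placing $\tilde{Q}$ into the weakly chained diagonally dominant (WCDD) class of symmetric $Z$-matrices. The classical Gershgorin/WCDD result then gives $\tilde{Q} \succ 0$, and the congruence gives $Q \succ 0$.

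The hard part will be the row-sum estimate for $\tilde{Q}$: writing $\tilde{Q}_{ii} - \sum_{j \ne i}|\tilde{Q}_{ij}| = 1 + q_i\bigl[\nu_1(d_i - d_i^{\mathrm{out}}) + \nu_2 b_i^0\bigr]$, one sees that the identity $\pounds q = \underline{1}$ buys exactly one unit of ``excess'' per row, and this must dominate any deficit coming from asymmetries between in- and out-weights. The clean case is the symmetric (undirected) setting, in which the column sums of $L$ vanish and the expression reduces to $1 + \nu_2 q_i b_i^0 \ge 1 > 0$, giving strict diagonal dominance and immediate positive definiteness via Gershgorin. The general case requires combining this identity with the connectivity/spanning-tree hypothesis and the standard WCDD theorem (as used in Qu's cooperative control framework and the Lewis-type graph Lyapunov arguments cited in the paper) to conclude the proof.
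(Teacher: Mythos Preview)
The paper's own proof is a one-line deferral: it simply invokes \cite[Lemma~10.1]{lewis2013} after noting that $\pounds := \nu_1 L + \nu_2 B$ is nonsingular by Lemma~\ref{rem:lbnon}. Your proposal goes well beyond this, supplying the $M$-matrix and diagonal-dominance content that sits behind that citation. For $P\succ 0$ your argument is essentially right, with one correction: ``nonsingular $Z$-matrix'' does \emph{not} by itself yield ``$M$-matrix'' (consider $\left[\begin{smallmatrix}1 & -2\\ -2 & 1\end{smallmatrix}\right]$). What actually works is that $\pounds$ has row sums $\nu_2 b_i^0 \ge 0$ (weak diagonal dominance), with strict dominance at every leader-connected node; the spanning-tree chain of Assumption~\ref{assp1}(a) then makes $\pounds$ weakly chained diagonally dominant, hence a nonsingular $M$-matrix with $\pounds^{-1}\ge 0$. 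From there your $q_i>0$ contradiction is clean.

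For $Q$, your undirected reduction is correct: when $d_i = d_i^{\mathrm{out}}$ the row sums of $\tilde Q$ equal $1+\nu_2 q_i b_i^0\ge 1$, and strict diagonal dominance gives $\tilde Q\succ 0$ immediately. In the directed case, however, the step ``I would then show that these row sums are non-negative'' has a genuine gap. The quantity $1 + q_i\bigl[\nu_1(d_i - d_i^{\mathrm{out}}) + \nu_2 b_i^0\bigr]$ can be negative whenever out-weights sufficiently exceed in-weights, so $\tilde Q$ need not be even weakly diagonally dominant and the WCDD theorem does not apply. Concretely, take $N=2$, $\nu_1=\nu_2=1$, $a_{12}=1$, $a_{21}=10$, $b_1^0=1$, $b_2^0=0$: then $\pounds = \left[\begin{smallmatrix}2 & -1\\ -10 & 10\end{smallmatrix}\right]$, $q=\tfrac{1}{10}[11,12]^\top$, and the first row of $\tilde Q$ has sum $1 + 1.1\,(1-10+1) = -7.8<0$. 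Thus the congruence-plus-WCDD route cannot close the directed case; the arguments in the Lewis/Qu framework you invoke proceed through other $M$-matrix characterizations (or impose additional structural hypotheses) rather than through diagonal dominance of $\tilde Q$. Since the paper itself does nothing more than cite \cite{lewis2013}, your final deferral to that source is in the same spirit---just be aware that the specific WCDD mechanism you outline is not what carries the proof there.
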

\begin{proof}
      The proof is similar to \cite[Lemma 10.1]{lewis2013} by noting that \(\nu_1 L + \nu_2 B\) is nonsingular by Lemma~\ref{rem:lbnon}.
\end{proof}


\section{Main Result} \label{MainResult}

\begin{theorem} 
\label{thm:MainResult}
Under the Assumptions \ref{assp1},  \ref{assp} and \ref{assumption1}, 
there exist a control law and NN tuning laws that guarantee synchronization, stability, and collision/obstacle avoidance in the distributed multi-agent with the follower dynamics  \eqref{planta} and leader dynamics \eqref{plantref}. 
\end{theorem}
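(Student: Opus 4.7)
The plan is a Lyapunov-based cooperative-UUB argument. The proposed control law for agent $i$ would contain three pieces: (a) a stabilising feedback term that drives the weighted stability error $r_i$ towards zero, (b) feedforward cancellation using the NN estimates $\hat f_i$, $\hat f_0$, $\hat w_i$, and (c) the gradient of the aggregated potential $M = \tfrac{1}{2}\sum_{i,j} m_{ij} + \sum_i m_{i0} + \sum_{i,\bar b} m_{i\bar b}$ built from \eqref{col1}--\eqref{obs1}, which supplies the repulsive forces required for inter-agent and agent-obstacle avoidance. The NN tuning laws will be of $\sigma$-modification type,
\begin{equation*}
\dot{\hat\theta} = \Gamma_\theta\,\phi(x)\,r^{\top} P(\nu_1 L + \nu_2 B) - \sigma_\theta \hat\theta,
\end{equation*}
with analogous expressions for $\hat\theta_0$ and $\hat\theta_w$; here $P$ is the diagonal matrix provided by Lemma~\ref{lem:PQpositivedefinite}, and the $\sigma$-terms ensure bounded weights even without persistent excitation.

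The Lyapunov candidate I would use is
\begin{equation*}
V = \tfrac12 r^{\top} P r + \tfrac12 \,\mathrm{tr}\{\tilde\theta^{\top}\Gamma_\theta^{-1}\tilde\theta\} + \tfrac12\, \mathrm{tr}\{\tilde\theta_0^{\top}\Gamma_0^{-1}\tilde\theta_0\} + \tfrac12\, \mathrm{tr}\{\tilde\theta_w^{\top}\Gamma_w^{-1}\tilde\theta_w\} + M.
\end{equation*}
Differentiating along the closed-loop dynamics \eqref{errdyn} and using $\dot{\tilde\theta}=-\dot{\hat\theta}$, the quadratic term in $r$ produces a strictly negative contribution because, by Lemma~\ref{lem:PQpositivedefinite}, $Q:=P(\nu_1 L+\nu_2 B)+(\nu_1 L+\nu_2 B)^{\top} P$ is positive definite. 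The specific choice of regressor $P(\nu_1 L+\nu_2 B)\phi(x)$ in the tuning laws is exactly what is needed for the parameter-related cross terms to telescope against $-\mathrm{tr}\{\tilde\theta^{\top}\Gamma_\theta^{-1}\dot{\hat\theta}\}$. The $\sigma$-modification leaves a residual $\sigma_\theta\,\mathrm{tr}\{\tilde\theta^{\top}\hat\theta\}$, which I bound via the identity $\mathrm{tr}\{\tilde\theta^{\top}\hat\theta\}=\mathrm{tr}\{\tilde\theta^{\top}\theta\}-\|\tilde\theta\|_F^2$ combined with Young's inequality, while the approximation errors $\varepsilon,\varepsilon_0,\varepsilon_w$ are absorbed using the bounds provided by Assumption~\ref{assumption1}. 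The potential-function contribution $\dot M = \nabla_x M\cdot \dot x$ is exactly cancelled by the $-\nabla_x M$ piece of the control law. Collecting everything yields
\begin{equation*}
\dot V \le -\kappa_r \|r\|^2 - \kappa_\theta \|\tilde\theta\|_F^2 - \kappa_0 \|\tilde\theta_0\|_F^2 - \kappa_w \|\tilde\theta_w\|_F^2 + \mu,
\end{equation*}
for positive constants $\kappa_{(\cdot)}$ and a lumped residual $\mu$. Standard Lyapunov arguments then give UUB of $r$ and of the weight errors, and the preceding lemma asserting that $r_i$-boundedness implies $e_i$-boundedness delivers CUUB of every $e_i^k$ and, through the relation \eqref{errordynu}, of the tracking errors $\delta_i^k$. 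Collision and obstacle avoidance follow because boundedness of $V$ forces boundedness of $M$; since each $m_{ij}$, $m_{i0}$, $m_{i\bar b}$ grows unboundedly as the corresponding distance approaches its critical level, an initially safe configuration remains safe for all $t\ge t_0$.

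The hardest step will be the distributed implementability of the controller together with the sign-and-scaling bookkeeping for the coupled error dynamics \eqref{errdyn}. Because the leader--follower coupling enters through $\nu_1 L+\nu_2 B$, any term that must appear as $-cr$ inside $\dot r$ demands compensation by $(\nu_1 L+\nu_2 B)^{-1}$, which is non-local; my strategy is to let the tuning-law regressors carry the factor $P(\nu_1 L+\nu_2 B)$ explicitly so that cancellation in $\dot V$ takes place without inverting $\nu_1 L+\nu_2 B$ inside the control itself, exploiting the factorisation of $Q$ in Lemma~\ref{lem:PQpositivedefinite}. A secondary subtlety is that $\dot M$ involves the unknown $f(x)+w$, so the exact cancellation with $-\nabla_x M$ relies on the same NN estimates entering the control; verifying that the resulting residual can be absorbed into $\mu$ while still guaranteeing forward invariance of the safe set is the last piece that must be checked carefully.
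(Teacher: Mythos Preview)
Your overall architecture is close to the paper's, but there is one genuine gap and one missing ingredient.

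\textbf{The potential-function cancellation fails for $n\ge 2$.} The potentials $m_{ij},m_{i0},m_{i\bar b}$ depend only on the positions $x^1$, so $\dot M=(\nabla_{x^1}M)^\top\dot x^1=(\nabla_{x^1}M)^\top x^2$; the control $u$ acts on $\dot x^n$ and does not appear in $\dot M$ at all. Hence placing $-\nabla_x M$ inside $u$ cannot cancel $\dot M$ in $\dot V$. Your ``secondary subtlety'' misdiagnoses this: $\dot M$ does not involve $f(x)+w$; it involves the velocity state, which is known. The paper sidesteps the issue entirely: it does \emph{not} put $M$ in the Lyapunov function. The avoidance terms $\Gamma^0 M^0+\Gamma^1 M_I^c+\Gamma^2 M_0^c$ are injected into $u$, show up in $\dot V_1$ as $r^\top P(\nu_1 L+\nu_2 B)(\cdot)$, and are simply bounded by constants $T_M,T_N$ that get absorbed into the UUB residual $\Lambda$. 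So if you want to follow the paper's route, drop $M$ from $V$ and treat the avoidance forces as bounded perturbations.

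\textbf{Missing $E_1$ term in the Lyapunov function.} The paper's candidate contains an additional piece $V_5=\tfrac12\mathrm{tr}\{E_1P_1E_1^\top\}$ with $P_1$ solving \eqref{huwitx}. This is not cosmetic: the drift $\rho=E_2\bar\lambda$ in \eqref{errdyn} cannot be cancelled exactly by a distributed control (your own remark about $(\nu_1L+\nu_2B)^{-1}$ being non-local is precisely the obstruction). The paper instead uses the local compensation $(D+B)^{-1}\rho$, which leaves a residual $r^\top PA(D+B)^{-1}\rho$ bounded by $\|r\|\,\|E_1\|_F$; this cross term is then dominated by the $-\tfrac{\beta}{2}\|E_1\|_F^2$ coming from $\dot V_5$. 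Your regressor trick handles only the NN-related cross terms, not $\rho$, so without $V_5$ (or an equivalent) the bound $\dot V\le -\kappa_r\|r\|^2+\cdots$ cannot be closed. Relatedly, the paper's tuning laws carry the \emph{diagonal} factor $P(D+B)$ rather than your $P(\nu_1L+\nu_2B)$; this keeps the adaptation fully local and shifts the off-diagonal $PA$ contribution into a bounded cross term $\Phi_n\bar\sigma(P)\bar\sigma(A)\|\tilde\theta\|_F\|r\|$, handled together with the others in the final quadratic-form inequality $\dot V\le -z^\top Kz+\omega^\top z$.
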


\begin{proof}
    This theorem will be proved by construction in Section~\ref{sec:Proof}, i.e., by providing a control law and NN tuning laws. 
\end{proof}

Before proving Theorem~\ref{thm:MainResult}, we present a control strategy together with adaptive tuning strategy in Sections~\ref{control_law_defn} and~\ref{NN_Laws_Defn}, respectively. We then demonstrate that
applying the  distributed control law \eqref{controllaw} along with the NN tuning laws \eqref{adap1}, \eqref{adapt2}, \eqref{adapt3} guarantees the following:
\begin{enumerate}
    \item \textbf{Synchronization:} All follower agents achieve cooperative formation consensus tracking with the leader, such that the tracking errors \( \delta^1, \ldots, \delta^n \) are cooperatively uniformly ultimately bounded (CUUB). This implies that all agents 
    synchronize with the leader while maintaining bounded residual errors.
    \item \textbf{Stability:} The closed-loop system is globally bounded and remains stable for all \( t \geq 0 \).
    \item \textbf{Collision/Obstacle Avoidance:} Agents successfully avoid local agents/obstacles while maintaining synchronization with the leader and preserving system stability.
\end{enumerate}

\subsection{Proposed Distributed Control Law}\label{control_law_defn}
The proposed distributed control law is presented for each agent \(i\) as:
\[
u_i = u_i^d - u_i^c - u_i^0 
\]
where:
\begin{align}
    \hspace{-6pt}u_i^d &= \frac{\rho}{{d}_{i}+b_{i}^{0}} - \hat{\theta}_i^{\top}{\phi}_i - \hat{\theta}_{iw}^{\top}{\phi_{iw}} +\hat{\theta}_0^{\top}{\phi_0} + r_i - {c}_i E_{i0}^\top, \\
    \hspace{-6pt}u_i^0 &= \Gamma^0_b \sum_{b=1}^{\xi} m_{i\bar{b}}, \\
    \hspace{-6pt}u_i^c &= \Gamma^1_{ij} \sum_{j=1}^{N} m_{ij} +  \Gamma^2_{i0} \sum_{j=1}^{N} m_{i0}.
\end{align}
This yields the global form of the control law \( U \) as:
\begin{equation}\label{controllaw}
    U = U^d - U^0 - U^c
\end{equation}
where:
\begin{align}
    U^d &= {(D+B)}^{-1}{\rho} - \hat{\theta}^{\top}{\phi} - \hat{\theta}_{w}^{\top}{\phi_{w}} + \hat{\theta}_0^{\top}{\phi_0} + r - {c} E_0^\top, \\
    U^0 &= \Gamma^0 M^0, \\
    U^c &= \Gamma^1 M^c_I + \Gamma^2 M^c_0.
\end{align}
\( \Gamma^0 \) and \( \Gamma^1 \) are gain matrices that modulate the influence of obstacle avoidance (\(U^0\)) and collision avoidance (\(U^c\)). The overall control law is:
\begin{multline}
    U = {(D+B)}^{-1}{\rho} - \hat{\theta}^{\top}{\phi} - \hat{\theta}_{w}^{\top}{\phi_{w}} +\hat{\theta}_0^{\top}{\phi_0} + r - {c} E_0^\top\\ - \Gamma^0 M^0 - \Gamma^1 M^c_I - \Gamma^2 M^c_0
\end{multline}
Here, \( U = [u_1, u_2, \cdots, u_N] \) is the control input vector for all agents, and \((D+B)^{-1}\) is the inverse of the diagonal matrix \( \text{diag}[({d}_{1}+b_{10}), ({d}_{2}+b_{20}), \dots, ({d}_{N}+b_{N0})] \). The global neural network weight matrices \( \hat{\theta}, \hat{\theta}_{w}, \hat{\theta}_0 \) correspond to the agents' dynamics, disturbances, and leader's dynamics, while the global basis function matrices \( \phi, \phi_w, \phi_0 \) relate to these respective dynamics. \( {c}_i = [k^1, k^2, \cdots, k^n] \in \mathbb{R}^{1 \times n} \) is a gain matrix and  
 \( E_{i0}^\top =  
\begin{bmatrix}
    \bar{x}^1_{i} - \bar{x}^1_{0}  \\
    \bar{x}^2_{i} - \bar{x}^2_{0}  \\
    \vdots \\
    \bar{x}^n_{i} - \bar{x}^n_{0}
\end{bmatrix}
\in \mathbb{R}^{n \times 1} \) is the relative error between the follower and leader. In addition, \( {c} \) and \(E_0\) are the global forms of \({c}_i\) and \( E_{i0}\), respectively.

\subsection{Proposed NN Local Tuning Laws}\label{NN_Laws_Defn}
The adaptive tuning laws are derived from a Lyapunov function that captures both control design at each agent and the communication graph topology.

The NN parameter adaptive tuning laws are:
\begin{equation}
\dot{\hat{\theta}}_i = - F_i \bigg[\phi_i r_i p_i(d_i + b_{i}^{0}) + \kappa_i\hat{\theta}_i \bigg],
\end{equation}

\begin{equation} 
\dot{\hat{\theta}}_0 = F_{i0}\bigg[\phi_0 r_i p_i(d_i + b_{i}^{0}) - \kappa_0\hat{\theta}_0 \bigg],
\end{equation}

\begin{equation}
\dot{\hat{\theta}}_{iw} = - F_{iw} \bigg[\phi_{iw} r_i p_i(d_i + b_{i}^{0}) + \kappa_{iw}\hat{\theta}_{iw} \bigg].
\end{equation}
and in the global context, the parameter estimate dynamics are:
\begin{equation} \label{adap1}
\dot{\hat{\theta}} = - F \bigg[\phi r^{\top} P(D + B) + \kappa\hat{\theta} \bigg],
\end{equation}

\begin{equation} \label{adapt2}
\dot{\hat{\theta}}_0 = F_0\bigg[\phi_0 r^{\top} P(D + B) - \kappa_0\hat{\theta}_0 \bigg],
\end{equation}

\begin{equation}  \label{adapt3}
\dot{\hat{\theta}}_{w} = - F_{w} \bigg[\phi_{w} r^{\top} P(D + B) + \kappa_{w}\hat{\theta}_{w} \bigg].
\end{equation}
Here, \(F_i\), \(F_{i0}\), and \(F_{iw}\) are positive definite matrices structured as \(F = \text{diag}(F_1,F_2,...,F_N)\), \(F_0 = \text{diag}(F_0,F_0,...,F_0)\), and \(F_w = \text{diag}(F_{1w},F_{2w},...,F_{Nw})\), respectively. The matrix \(P\) is defined as in \eqref{pdefinite}, and \(\kappa\), \(\kappa_{0}\), and \(\kappa_{w}\) are positive scalar tuning gains.

\subsection{Proof of Theorem~\ref{thm:MainResult}}
\label{sec:Proof}
\begin{proof}
\textbf{Part 1:} We consider the Lyapunov function:
\begin{equation}\label{allV}
    V = V_1 + V_2 + V_3 + V_4 + V_5 
\end{equation}
with each term given as \( V_1 = \frac{1}{2}r^{\top}Pr \), \(V_2 = \frac{1}{2}\text{tr}\{\tilde{\theta}^{\top}F^{-1}\tilde{\theta}\}  \), \(V_3 = \frac{1}{2}\text{tr}\{\tilde{\theta}_0^{\top}F_0^{-1}\tilde{\theta}_0\}    \), \(V_4 = \frac{1}{2}\text{tr}\{\tilde{\theta}_w^{\top}F_w^{-1}\tilde{\theta}_w\}   \), and \mbox{\(V_5 = \frac{1}{2}\text{tr}\{E_1 P_1(E_1)^{\top}\} \)}.
Taking the derivative of \( V_1 \) and invoking the property \( a^{\top} b = \text{tr}\{b \, a^{\top}\} \), we obtain:
\begin{multline}
    \dot{V}_1 = -  \frac{1}{2}r^{\top} Q r  - \text{tr}\{\tilde{\theta}^{\top}{\phi}r^{\top} P (D + B)\} + 
    \text{tr}\{\tilde{\theta}^{\top}{\phi} r^{\top} PA\} \\ -
    \text{tr}\{\tilde{\theta}_w^{\top}{\phi}_w r^{\top} P(D+B)\}   + r^{\top} PA(D + B)^{-1} \rho \\ + \text{tr}\{ \tilde{\theta}_w^{\top}{\phi}_w r^{\top} PA \} + \text{tr}\{ \tilde{\theta}_0^{\top}{\phi}_0 r^{\top} P(D+B) \} + \frac{1}{2}cr^{\top} Q E_0^\top \\ - \text{tr} \{ \tilde{\theta}_0^{\top}{\phi}_0 r^{\top} PA \}  - r^{\top} P(\nu_1L + \nu_2B)(\varepsilon + \epsilon_w - \epsilon_0) \\ + r^{\top} P(\nu_{1}L + \nu_{2}B)(\Gamma^0 M^0 + \Gamma^1 M^c_I + \Gamma^2 M^c_0).
\end{multline}
The derivatives of \( V_2 \), \( V_3 \), \( V_4 \), and \( V_5 \) are, respectively,
\(
    \dot{V}_2 = -\text{tr}\{ \tilde{\theta}^{\top} F^{-1} \dot{\hat{\theta}} \},  
    \dot{V}_3 = -\text{tr}\{ \tilde{\theta}_0^{\top} F_0^{-1} \dot{\hat{\theta}}_0 \},  
    \dot{V}_4 = -\text{tr}\{ \tilde{\theta}_w^{\top} F_w^{-1} \dot{\hat{\theta}}_w \}
\) \text{and}
\begin{equation} \label{V5}
     \dot{V}_5 \le -\frac{\beta}{2} || E_1 ||_F^2 + \bar{\sigma}(P_1)||l||||r||||E_1||_F.
\end{equation}

Hence, \( \dot{V} = \dot{V}_1 + \dot{V}_2 + \dot{V}_3 + \dot{V}_4 + \dot{V}_5 \) satisfies:
\begin{multline}
    \dot{V} \le -\bigg [ \frac{1}{2}  \underline{\sigma}(Q) - \frac{\bar{\sigma}(P)\bar{\sigma}(A)}{\underline{\sigma}(D+B)}||\bar{\lambda}|| \bigg ] ||r||^2 
    \\
    + \bigg[ \frac{\bar{\sigma}(P)\bar{\sigma}(A)}{\underline{\sigma}(D+B)}||\bigtriangleup||_F||\bar{\lambda}||  + \bar{\sigma}(P_1) \bigg] ||r||||E_1||_F \\
    + \bar{\sigma}(P)\bar{\sigma}(\nu_1L + \nu_2B)T_M||r||
    - \kappa ||\tilde{\theta}||_F^2 -  \kappa_0 ||\tilde{\theta}_0||_{F_0}^2 
    \\ - \kappa_w ||\tilde{\theta}_w||_{F_w}^2 + \Phi_n \bar{\sigma}(P)\bar{\sigma}(A)||\tilde{\theta}||_F||r|| 
    \\ + \Phi_{nw} \bar{\sigma}(P)\bar{\sigma}(A)||\tilde{\theta}_w||_{F_w}||r|| + \Phi_{n0} \bar{\sigma}(P)\bar{\sigma}(A)||\tilde{\theta}_0||_{F_0}||r|| \\ +  \bar{\sigma}(P)\bar{\sigma}(\nu_1L + \nu_2B)T_N||r|| -\frac{\beta}{2} || E_1 ||_F^2 + \kappa \Theta_n||\tilde{\theta}||_F \\ + \kappa_w \Theta_{nw}||\tilde{\theta}_w||_{F_w} 
    + \kappa_0 \Theta_{n0}||\tilde{\theta}_0||_{F_0} + \frac{1}{2} c E_0 \underline{\sigma}(Q)||r||
\end{multline}
which can be written as
\begin{equation} \label{modv}
    \dot{V} \le -z^{\top}Kz + \omega^{\top}z = -V_z(z)
\end{equation}
with
\begin{equation*}
K = \begin{bmatrix}
    \frac{\beta}{2} & 0 & 0 & 0 & g \\
    0 & \kappa & 0 &  0 & \gamma_1 \\
    0 & 0 & \kappa_w & 0 & \gamma_2 \\
    0 & 0 & 0 & \kappa_0 & \gamma_3 \\
    g & \gamma_1 & \gamma_2 & \gamma_3 & \mu_1 \\
    \end{bmatrix},
\end{equation*}
\begin{align*}
    \omega &= \bigg [ 0,\kappa \Theta_n,\kappa_w \Theta_{nw}, \kappa_0 \Theta_{n0}, \Lambda \bigg]^{\top}, \\
     \Lambda &= \bar{\sigma}(P)\bar{\sigma}(\nu_1L + \nu_2B)(T_M+T_N) + \mu_2 , \\
   z &= \bigg[ ||E_1||_F, ||\tilde{\theta}||_F, ||\tilde{\theta}_w||_{F_w}, ||\tilde{\theta}_0||_{F_0}, ||r|| \bigg]^{\top} , \\
    h &= \frac{\bar{\sigma}(P)\bar{\sigma}(A)}{\underline{\sigma}(D+B)}||\bar{\lambda}|| , \\
    \gamma_1 &=  -\frac{1}{2}\Phi_n\bar{\sigma}(P)\bar{\sigma}(A), \\ 
    \gamma_2  &=  -\frac{1}{2}\Phi_{nw}\bar{\sigma}(P)\bar{\sigma}(A), \\
    \gamma_3 &=  -\frac{1}{2}\Phi_{n0}\bar{\sigma}(P)\bar{\sigma}(A), \\ 
    \mu_1  &=  \frac{1}{2} \underline{\sigma}(Q) - h , \\
    \mu_2 &= \frac{1}{2} c E_0 \underline{\sigma}(Q), \\ \quad  
    g  &=  -\frac{1}{2} \bigg [ \frac{\bar{\sigma}(P)\bar{\sigma}(A)}{\underline{\sigma}(D+B)}||\bigtriangleup||_F||\bar{\lambda}|| + \bar{\sigma}(P_1) \bigg] .
\end{align*}
\(V_z(z)\) is positive definite whenever the following conditions \cite{lewis2013} are met:
\begin{enumerate}
    \item \(K\) is positive definite.
    \item \(||z|| > \frac{||\omega||}{\underline{\sigma}(K)}\).
\end{enumerate}
For \textit{\textbf{condition 1}} to be satisfied, according to Sylvester's criterion, we must ensure 
\( 
    \beta > 0,
\) 
\( 
    \frac{\beta}{2}\kappa > 0,
\) 
\( 
    \frac{\beta}{2}\kappa \kappa_w > 0,
\) 
\( 
    \frac{\beta}{2}\kappa \kappa_w \kappa_0 > 0,
\) 
and \( 
    \frac{\beta}{2}\kappa \kappa_w (\kappa_0  \mu_1 - \gamma_3^2) - \frac{\beta}{2}\kappa \gamma_2^2 
    \kappa_0 - \frac{\beta}{2}\gamma_1^2 \kappa_w  \kappa_0 - g^2 \kappa \kappa_w \kappa_0 > 0.  
\)
where, the last condition, with the isolation of \( \mu_1 \), yields
\begin{equation*}
    \mu_1 >
    \frac{1}{\frac{\beta}{2}\kappa \kappa_w \kappa_0} \Bigg[ \frac{\beta}{2}\kappa \kappa_w\gamma_3^2 + \frac{\beta}{2}\kappa \gamma_2^2 \kappa_0  + \frac{\beta}{2}\gamma_1^2 \kappa_w  \kappa_0 +  g^2 \kappa \kappa_w \kappa_0 \Bigg]
\end{equation*}
For \textit{\textbf{condition 2}} to be satisfied, we require \( ||z|| > B_d \), where
\begin{equation}
    B_d = \frac{||\omega||_1}{\underline{\sigma}(K)}
\end{equation}
with \( \omega = \bigg [ 0,\kappa \Theta_n,\kappa_w \Theta_{nw}, \kappa_0 \Theta_{n0}, \Lambda \bigg]^{\top} \) and \( ||\omega||_1 = \kappa \Theta_n + \kappa_w \Theta_{nw} + \kappa_0 \Theta_{n0} + \Lambda  \). Thus,
\begin{equation}
    B_d = \frac{\kappa \Theta_n + \kappa_w \Theta_{nw} + \kappa_0 \Theta_{n0} + \Lambda}{\underline{\sigma}(K)}
\end{equation}
If \( ||z|| > B_d \), then \(\dot{V} \leq -V_z(z)\) with \(V_z(z)\) being positive definite. Since by \cite{lewis2013}, the Lyapunov function \eqref{allV} satisfies
\begin{equation} \label{zdef}
    \underline{\sigma}(\eta)||z||^2 \leq V \leq \bar{\sigma}(\zeta)||z||^2,
\end{equation}
then for any initial \( z(t_0) \), there exists \( T_0 \) such that
\begin{equation} \label{zref}
    ||z(t)|| \leq \sqrt{\frac{\bar{\sigma}(\zeta)}{\underline{\sigma}(\eta)}}B_d, \quad \forall t \geq t_0 + T_0,
\end{equation}
where 
\(\eta = \text{diag}(\frac{\underline{\sigma}(P)}{2}, \frac{1}{2\bar{\sigma}(F)} ,\frac{\underline{\sigma}(F_0)}{2}, \frac{1}{2\bar{\sigma}(F_w)}, \frac{\underline{\sigma}(P_1)}{2} ,\frac{1}{2\bar{\sigma}(Q)})\) and 
\( \zeta = \text{diag}(\frac{\bar{\sigma}(P)}{2}, \frac{1}{2\underline{\sigma}(F)} ,\frac{\bar{\sigma}(F_0)}{2}, \frac{1}{2\underline{\sigma}(F_w)}, \frac{\bar{\sigma}(P_1)}{2} ,\frac{1}{2\underline{\sigma}(Q)})\).

Let \( \upsilon = \text{min}_{{||z||}\ge B_d} V_z(z)\), then
\begin{equation}
    T_0 = \frac{V(t_0) - \bar{\sigma}(\zeta)(B_d)^2}{\upsilon}
\end{equation}
This implies \(r(t)\) is ultimately bounded, which yields \(e_i(t)\), \(e^n(t)\), and \(\delta^n\) CUUB, thus achieving synchronization over~\(\mathcal{G}\).

\textbf{Part 2:} To complete the proof of Theorem 1, we show that $x_i(t)$ is bounded for \( \forall t \ge t_0\) and \( \forall i \in \mathcal{N}\). From \eqref{modv}:

\begin{equation} \label{cond2}
    \dot{V} \le -\underline{\sigma}(K)||z||^2 + ||\omega||||z||.
\end{equation}
which, together with \(\eqref{zdef}\), we obtain
\begin{equation}
    \frac{d}{dt}(\sqrt{V}) \le - \frac{\underline{\sigma}(K)}{2\bar{\sigma}(\zeta)}\sqrt{V} + \frac{||\omega||}{2\sqrt{\underline{\sigma}(\eta)}}
\end{equation}
Thus, it follows from \cite[Corollary 1.1]{hut2006}, that \(V(t)\) remains bounded for all \(t \ge t_0\). This results, together with Assumption \ref{assp}, yields \(\delta_{i}^k = (x_{i}^k - \psi_{i}) - (x_{0}^k - \psi_{0})\) being~CUUB, i.e., \(||x_i|| \le X_n\) and \(||x_0|| \le X_{n0}\) are bounded~$\forall t \ge t_0$.
\end{proof}

\section{NUMERICAL EXAMPLE} \label{NUMERICALEXAMPLE}
We consider a leader-follower multi-agent formation with one leader and five followers, where each agent has distinct nonlinear dynamics influenced by external disturbances. Figure \ref{fig:1} shows the information flow structure of the multi-agent system. The leader’s dynamics are given by:

\textbf{Leader}:
\begin{align}
    \dot{s_0} &= v_0 \\
    \dot{v_0} &= - 3v_0 + 1  - g \sin(\alpha(s_0))  - \frac{0.4 v_0^2}{m_0} \nonumber \\
    & + \frac{3\sin(2t) + 6\cos(2t)}{m_0}  - \frac{(s_0 + v_0 - 1)^2}{3 m_0} (s_0 + 4v_0 - 1)  \notag
\end{align}
Each follower \(i\) (\(i = 1, \dots, 5\)) has second-order nonlinear dynamics:

\noindent \textbf{Agent 1}:
    \begin{align}
    \dot{s_1} &= v_1 \\
    \dot{v_1} &= \frac{v_1 \sin(s_1)}{m_1} + \cos^2(v_1) - \frac{0.47 v_1^2}{m_1}   - g \sin(\alpha(s_1))  
    + \frac{u_1}{m_1} \notag \\ &\quad + \frac{\zeta_1}{m_1} \notag
    \end{align}

\noindent  \textbf{Agent 2}:
    \begin{align}
    \dot{s_2} &= v_2 \\
    \dot{v_2} &= \frac{-(s_2)^2 v_2}{m_2} + \cos^2(v_2)  - \frac{0.52 v_2^2}{m_2}  - g  \sin(\alpha(s_2))   + \frac{u_2}{m_2} \notag \\ &\quad + \frac{\zeta_2}{m_2}\notag
    \end{align}

\noindent  \textbf{Agent 3}:
    \begin{align}
    \dot{s_3} &= v_3 \\
    \dot{v_3} &= \frac{-(s_3)^2 v_3}{m_3} + \sin^2(v_3) - \frac{0.57 v_3^2}{m_3}  - g \sin(\alpha(s_3))  + \frac{u_3}{m_3} \notag \\ &\quad + \frac{\zeta_3}{m_3}  \notag
    \end{align}

\noindent  \textbf{Agent 4}:
    \begin{align}
    \dot{s_4} &= v_4 \\
    \dot{v_4} &= \frac{-3(s_4 + v_4 - 1)^2 (s_4 + v_4 - 1)}{m_4} - v_4 + 0.5 \sin(2t)   \nonumber \notag \\
    &\quad  + \cos(2t)  - \frac{0.65 v_4^2}{m_4}  - g \sin(\alpha(s_4)) + \frac{u_4}{m_4} + \frac{\zeta_4}{m_4}  \notag
    \end{align}

\noindent  \textbf{Agent 5}:
    \begin{align}
    \dot{s_5} &= v_5 \\
    \dot{v_5} &= \cos(s_5) - \frac{0.74 v_5^2}{m_5}   - g \sin(\alpha(s_5)) + \frac{u_5}{m_5} + \frac{\zeta_5}{m_5} \notag
    \end{align}

\begin{table*}[h!]
    \centering
    \caption{Parameters for Leader and Followers}
    \resizebox{\textwidth}{!}{%
    \begin{tabular}{|c|c|c|c|c|c|c|}
        \hline
        \textbf{Parameter} & \textbf{Leader} ($i = 0$) & \textbf{Agent 1} ($i = 1$) & \textbf{Agent 2} ($i = 2$) & \textbf{Agent 3} ($i = 3$) & \textbf{Agent 4} ($i = 4$) & \textbf{Agent 5} ($i = 5$) \\ \hline
         $m_i$ & 2000 & 1200 & 1100 & 1500 & 1400 & 1500 \\ \hline
         $g$ & 9.81 & 9.81 & 9.81 & 9.81 & 9.81 & 9.81 \\ \hline
         $\alpha(s_i)$ & $0.05 \sin(0.1 s_0)$ & $0.05 \sin(0.1 s_1)$ & $0.05 \sin(0.1 s_2)$ & $0.05 \sin(0.1 s_3)$ & $0.05 \sin(0.1 s_4)$ & $0.05 \sin(0.1 s_5)$ \\ 
         \hline
         $\zeta_i$ & Time-Varying & 5 & 5 & 5 & 5 & 5 \\ \hline
    \end{tabular}%
    }
    \label{table:vehicle_parameters}
\end{table*}

\begin{figure}
    \centering
    \includegraphics[width=0.4\textwidth]{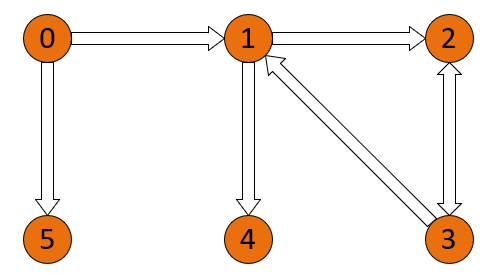}
    \caption{The Considered Fixed Topology of the augmented graph \(\bar{\mathcal{G}}\) in Section~\ref{NUMERICALEXAMPLE}.}
    \label{fig:1}
\end{figure}

\begin{figure}
    \centering
    \includegraphics[width=0.5\textwidth]{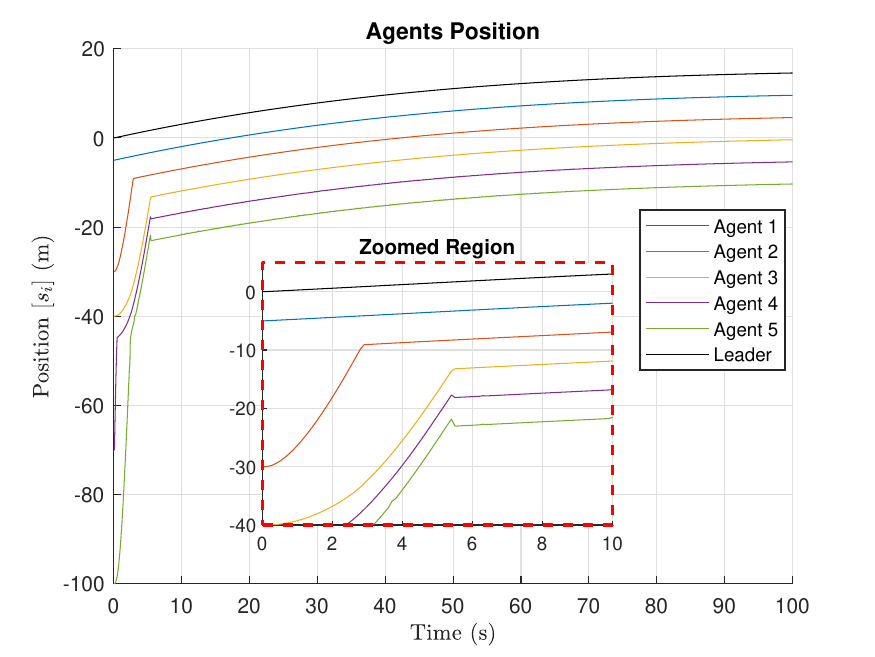}
    \caption{Agents Positions $s_i$, for $i \in \{1,2,3,4,5\}$ and the leader $0$.}
    \label{fig:2}
\end{figure}

\begin{figure}
    \centering
    \includegraphics[width=0.5\textwidth]{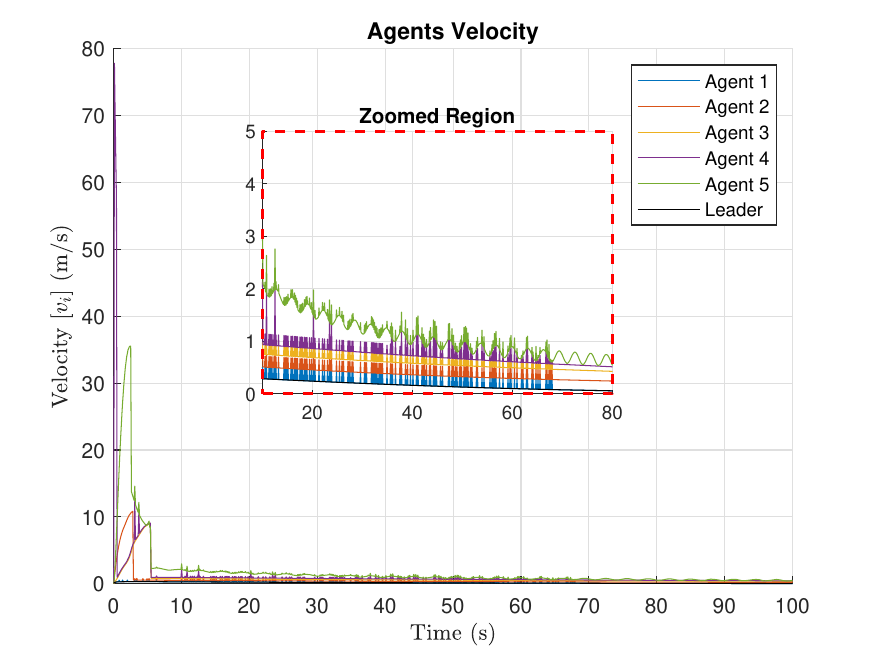}
    \caption{Agents Velocities $v_i$, for $i \in \{1,2,3,4,5\}$.}
    \label{fig:3}
\end{figure}

\begin{figure}
    \centering
    \includegraphics[width=0.5\textwidth]{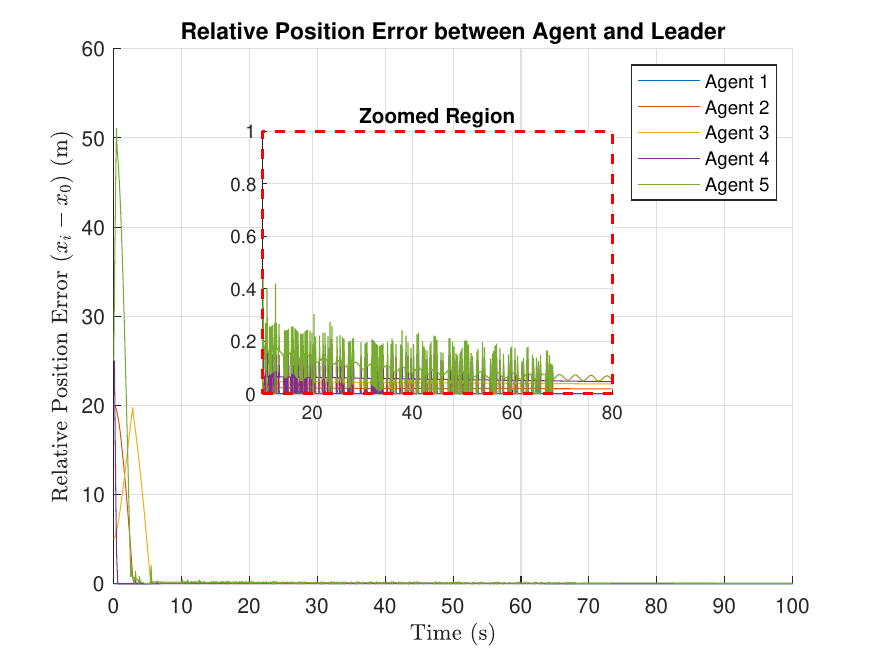}
    \caption{Relative Position Error, $E_{i0}$, for the first state, $k=1$, between Agents and Leader.}
    \label{fig:4}
\end{figure}

\begin{figure}
    \centering
    \includegraphics[width=0.5\textwidth]{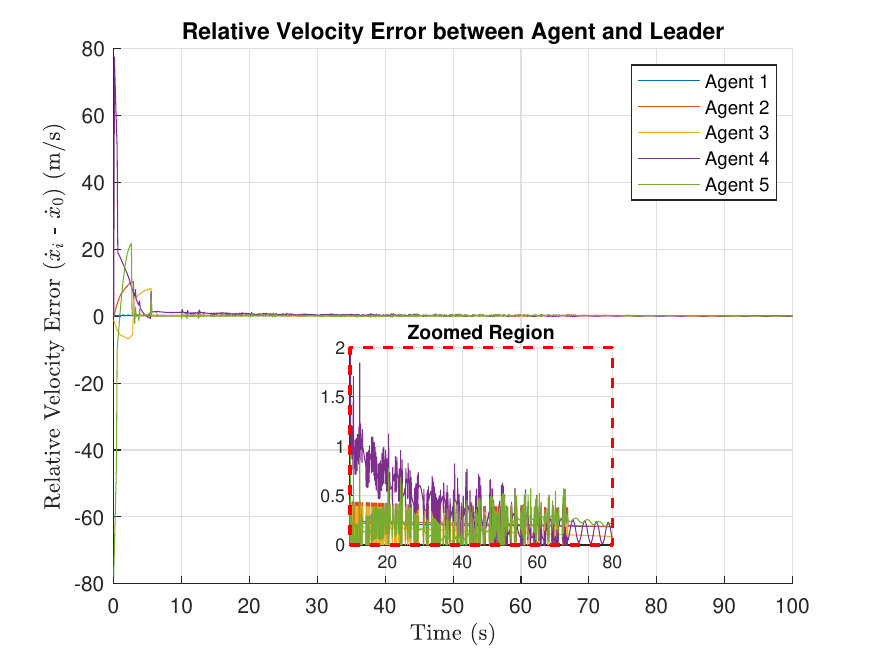}
    \caption{Relative Velocity Error between Agents and Leader.}
    \label{fig:5}
\end{figure}

\begin{figure}
    \centering
    \includegraphics[width=0.5\textwidth]{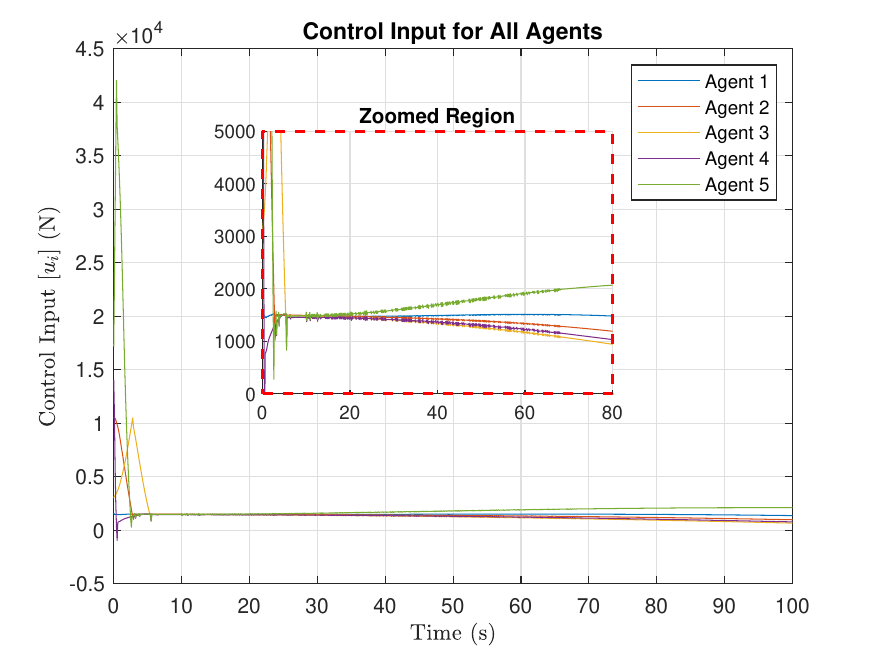}
    \caption{Agents Control Inputs $u_i$, for $i \in \{1,2,3,4,5\}$.}
    \label{fig:6}
\end{figure}

Figures \ref{fig:2} to \ref{fig:6} provide a detailed analysis of how the proposed control strategy ensures synchronization between the follower agents and the leader, while effectively maintaining system stability and handling disturbances. Initially, all agents start at different positions below the leader, demonstrating the system's capacity to regulate inter-agent distances and maintain a cohesive formation, as depicted in Figure \ref{fig:2}. As time progresses, the agents' velocities gradually converge towards the leader’s speed, reflecting the system’s ability to dynamically synchronize agents while compensating for initial trajectory discrepancies, as shown in Figure \ref{fig:3}. The relative position error between the agents and the leader, shown in Figure \ref{fig:4}, remains minimal throughout, suggesting that the control strategy maintains precise positioning and is robust against external disturbances. In Figure \ref{fig:5}, velocity errors exhibit an initial transient phase with notable variability, particularly for Agent 4, but ultimately converge, indicating the system's strong capacity for adaptation and stabilization over time. This highlights the control system’s resilience in managing fluctuations while ensuring long-term synchronization. Furthermore, the control inputs depicted in Figure \ref{fig:6} reveal distinct patterns across agents, reflecting varied demands based on their respective dynamic conditions. Agent 1, for example, requires a consistently higher control input, suggesting it may be compensating for greater dynamic challenges, while Agent 5’s input gradually increases, indicating evolving conditions that demand more effort to achieve synchronization. These observations demonstrate a well-coordinated control mechanism that balances both global synchronization and individual agent stability. The system ensures that all agents successfully converge toward a shared trajectory despite the initial discrepancies and disturbances, reflecting the overall effectiveness of the control strategy.

\section{CONCLUSION}\label{CONCLUSION}

This paper establishes a distributed adaptive control methodology for leader-follower formation consensus in heterogeneous multi-agent systems, incorporating collision and obstacle avoidance under fixed network topologies.

By employing neural network-based models to account for uncertainties in both the dynamics and disturbances in real-time, along with potential functions for collision and obstacle avoidance, the proposed Lyapunov-based control laws and adaptive tuning laws ensure synchronization, stability, and collision/obstacle avoidance. Numerical results illustrate the effectiveness of the proposed methodology, demonstrating synchronization and safety in a complex network of five agents with distinct nonlinear dynamics.

Future work includes extending the methodology to large networks of heterogeneous multi-agent systems with changing topologies and in the presence of communication delays.

\bibliographystyle{ieeetr}

\begin{thebibliography}{10}

\bibitem{lewis2013}
F.~L. Lewis, H.~Zhang, K.~Hengster-Movric, and A.~Das, {\em Cooperative Control of Multi-Agent Systems: Optimal and Adaptive Design Approaches}.
\newblock Springer, 2013.

\bibitem{1431045}
W.~Ren and R.~Beard, ``Consensus seeking in multiagent systems under dynamically changing interaction topologies,'' {\em IEEE Transactions on Automatic Control}, vol.~50, no.~5, pp.~655--661, 2005.

\bibitem{Khoo2009}
S.~Khoo, L.~Xie, and Z.~Man, ``Robust finite-time consensus tracking algorithm for multirobot systems,'' {\em IEEE/ASME Transactions on Mechatronics}, vol.~14, pp.~219--228, 2009.

\bibitem{Hong2006}
Y.~Hong, J.~Hu, and L.~Gao, ``Tracking control for multi-agent consensus with an active leader and variable topologies,'' {\em Automatica}, vol.~42, no.~7, pp.~1177--1182, 2006.

\bibitem{Qin2011}
J.~Qin, H.~Gao, and W.~Wei, ``Consensus control for multi-agent systems with switching topologies and time-varying delays,'' {\em IEEE Transactions on Automatic Control}, vol.~56, no.~5, pp.~1245--1250, 2011.

\bibitem{Olfati2006}
R.~Olfati-Saber, ``Flocking for multi-agent dynamic systems: Algorithms and theory,'' {\em IEEE Transactions on Automatic Control}, vol.~51, no.~3, pp.~401--420, 2006.

\bibitem{Lin2005}
Z.~Lin, B.~Francis, and M.~Maggiore, ``Necessary and sufficient graphical conditions for formation control of unicycles,'' {\em IEEE Transactions on Automatic Control}, vol.~50, no.~1, pp.~121--127, 2005.

\bibitem{Moreau2005}
L.~Moreau, ``Stability of multi-agent systems with time-dependent communication links,'' {\em IEEE Transactions on Automatic Control}, vol.~50, no.~2, pp.~169--182, 2005.

\bibitem{Shi2009}
G.~Shi and Y.~Hong, ``Global target aggregation and state agreement of nonlinear multi-agent systems with switching topologies,'' {\em Automatica}, vol.~45, no.~5, pp.~1205--1211, 2009.

\bibitem{Wang2011}
J.~Wang and M.~Xin, ``Multi-agent consensus algorithm with obstacle avoidance via optimal control approach,'' in {\em Proceedings of the 2011 American Control Conference}, pp.~2783--2788, 2011.

\bibitem{WeiRen}
W.~Ren, ``Consensus based formation control strategies for multi-vehicle systems,'' in {\em 2006 American Control Conference}, pp.~6 pp.--, 2006.

\bibitem{hespanha2007networked}
J.~P. Hespanha, P.~Naghshtabrizi, and Y.~Xu, ``Networked control systems: Analysis and design,'' {\em Proceedings of the IEEE}, vol.~95, no.~1, pp.~138--162, 2007.

\bibitem{yesildirak1995neural}
A.~Yesildirak, F.~W. Lewis, and S.~Jagannathan, {\em Neural Network Control of Robot Manipulators and Non-Linear Systems}.
\newblock London, UK: Taylor \& Francis, 1995.

\bibitem{qu2009cooperative}
Z.~Qu, {\em Cooperative Control of Dynamical Systems: Applications to Autonomous Vehicles}.
\newblock London: Springer-Verlag, 2009.

\bibitem{shivakumar1974sufficient}
P.~N. Shivakumar and K.~H. Chew, ``A sufficient condition for nonvanishing of determinants,'' {\em Proceedings of the American Mathematical Society}, vol.~43, no.~1, pp.~63--66, 1974.

\bibitem{hut2006}
T.~Hu, J.~Zhu, and Z.~Sun, ``Adaptive neural control for a class of mimo non-linear systems with guaranteed transient performance,'' in {\em Advances in Neural Networks - ISNN 2006} (J.~Wang, Z.~Yi, J.~M. Zurada, B.-L. Lu, and H.~Yin, eds.), (Berlin, Heidelberg), pp.~849--858, Springer Berlin Heidelberg, 2006.

\end{thebibliography}

\end{document}